\documentclass{aa}
  
\usepackage[varg]{txfonts}
\usepackage{graphicx}
\usepackage[dvipsnames]{xcolor}
\usepackage{color,epsfig,amsmath,bm}
\usepackage{ulem}

\usepackage{amsthm}

\begin{document}

\title{The existence of null points of flow and magnetic fields as prerequisites for the presence of astropauses and their co-location}


 \author{Dieter H. Nickeler\inst{1} \corrauth{dieter.nickeler@asu.cas.cz}
 \and Kuljeet Singh Saddal\inst{1,2}\email{kuljeet.saddal@gmail.com}
 \and Rodrigo Meneses Pacheco\inst{3}\email{rodrigo.meneses@uv.cl}
 \and Michaela Kraus\inst{1}\email{michaela.kraus@asu.cas.cz}
        }

   \institute{Astronomical Institute, Czech Academy of Sciences, Fri\v{c}ova 298, 251 65 Ond\v{r}ejov, Czech Republic
   \and Astronomical Institute, Faculty of Mathematics and Physics, Charles University, V Hole\v{s}ovick\'{a}ch 2, 182 00 Prague, Czech Republic
   \and Escuela de Ingenier\'{i}a Civil, Universidad de Valpara\'{i}so, General Cruz 222, Valpara\'{i}so, Chile}

   \date{Received xxx; accepted xxx}

\authorrunning{Nickeler et al.}
\titlerunning{The existence and co-location of null points}

 
  \abstract
   {Astropauses can be defined via magnetic or fluid separatrices. For the boundary layer between the stellar wind and the local interstellar medium, two separatrices can occur, the magnetopause and the hydropause. The question is whether they are identical.} 
   {To address this, we investigate under which conditions the null points of the magnetic and velocity field, which define the corresponding separatrices, coincide spatially.}
   {We analysed the case of a non-monotonous pressure distribution driving stationary counterstreaming ideal hydrodynamic flows and demonstrate that the existence of an extremum of the pressure in the three-dimensional case together with a non-degeneracy condition on the velocity field implies that this point is also a stagnation point. Turning to ideal magnetohydrodynamics (MHD) and considering the regularity of all fields, we formulated the momentum equations, revealing their quasi-linearity. This allowed us to prove the occurrence of null points by examining the rank of the Jacobians of the vector fields.} 
   {We found that for a given null point of either the velocity or the magnetic field at the pressure extremum, the pendant vector field has a co-located null if the Jacobian of this field is non-singular. We extended the analysis to resistive MHD and further to MHD with general non-ideal terms and proved that if at least one of the first-order momentum equations is ideal, the stagnation point and the magnetic null point are co-located. Even for general fluid theory, the two null points co-locate as long as the curl of the generalised non-ideal terms is zero at one of them or can be represented by a linear operator in the magnetic and velocity field.}
   {In  general, it is not excluded that the velocity and magnetic nulls are dislocated. However, this would require specific properties of the curl of the non-ideal term, which would reduce the probability of a relaxed quasi-stationary configuration.}

   \keywords{Magnetohydrodynamics -- Hydrodynamics -- magnetic fields -- ISM: bubbles }

   \maketitle
   \nolinenumbers

\section{Introduction}

Large-scale flows and magnetic fields are ubiquitous in cosmic plasmas. When plasma flows from 
different sources encounter each other, such as the interstellar medium (ISM) and stellar outflows or winds, 
complex interfaces can form. The plasmas of the individual sources within the framework of macroscopic physics cannot penetrate each other freely but have to interact (magneto-) hydrodynamically. 
At the interface, where the interaction takes place, displacement and shearing motions of the involved fields can induce strong currents or vortices, called current or vortex sheets. Within these current or vortex sheets, isolated null points of the corresponding vector fields can form and persist, such as magnetic null (or neutral) points and velocity null (or stagnation) points, which are the navels or vertices of the magnetopauses and hydropauses.

At the boundary between the ISM and the stellar wind flow, a contact surface is formed that is called 
the astropause (or heliopause, in the case of the Sun). It is usually referred to as the hydropause
\citep{1961ApJ...134...20P, 1972NASSP.308..609A, 1991SSRv...58..193F, 1998JGR...103.1889L}.
In the specific case of field-aligned flows with a regular Alfv\'{e}n Mach number, the hydropause and magnetopause are identical \citep[e.g.][]{2006A&A...454..797N}. However,
an as yet unanswered question, posed and discussed by \citet{2008ASTRA...4....7N} and \citet{2014ASTRP...1...51N} and revisited by \citet{2022SSRv..218...36K} and \citet{2023FrASS..1060618L}, is which physical value defines this contact surface in general, the magnetic or the velocity field.

Within the framework of a quasi-stationary approach, the plasma on either side of the astropause typically satisfies different conditions, but across the boundary, the pressure must have developed an equilibrium. 
A general ansatz for a potential flow model with a stagnation point, in which the equilibrium condition across the astropause boundary layer is fulfilled automatically, was introduced by \citet{1993A&A...277..249F}. As in the Parker model \citep{1961ApJ...134...20P}, the stagnation point occurs because the superposition principle enables the homogeneous ISM flow 
(asymptotic boundary condition) to compensate for the radially outwards decreasing monopole outflow (solar wind).

Depending on the velocities of the involved plasma flows and their corresponding phase speeds, shocks can form at which the flows are decelerated from supersonic to subsonic speeds. For the solar (or in general stellar) wind, this occurs at the termination shock, whereas for the ISM flow, this occurs at the bow shock. Therefore, up to three interfaces can form \citep{1976Ap&SS..41..481B, 1993JGR....9815157B}. A sketch of this classical picture \citep[also sometimes called the \lq Baranov\rq~interface, see][]{2000Ap&SS.274...35F} is shown in 
Fig.\,\ref{fig:astrosphere}.

The astrospheric scenario of a multiple surface interface between the interstellar and stellar wind
is a motivation for us to focus on the surface of the astropause boundary layer. In an attempt to develop a kinematic model for the magnetic field in the vicinity of the astropause, \citet{2015ApJ...805..173R} used the ideal subsonic Parker flow, which is a point source superimposed by a homogeneous flow from infinity, and solved the ideal magnetohydrodynamic (MHD) induction
equation. The Parker flow is axis-symmetric, which can inherently lead to diverging magnetic fields on the astropause in case the magnetic field is transported passively by a flow. 
For instance, \citet{1996ApJ...460..185C} found that the introduction of a symmetry of at least one of the involved fields results in non-regular solutions, which seems to be a natural outcome of the 
assumption of a symmetry. This type of non-regular solutions, as found by \citet{2015ApJ...805..173R}, shows a flux pile-up regime so that the energy needs to be artificially dissipated or the field has to be modified \citep{2024ApJ...961..244F} for the infinite field strength at the stagnation point to vanish.

\begin{figure}
\centering
       \includegraphics[width=\columnwidth]{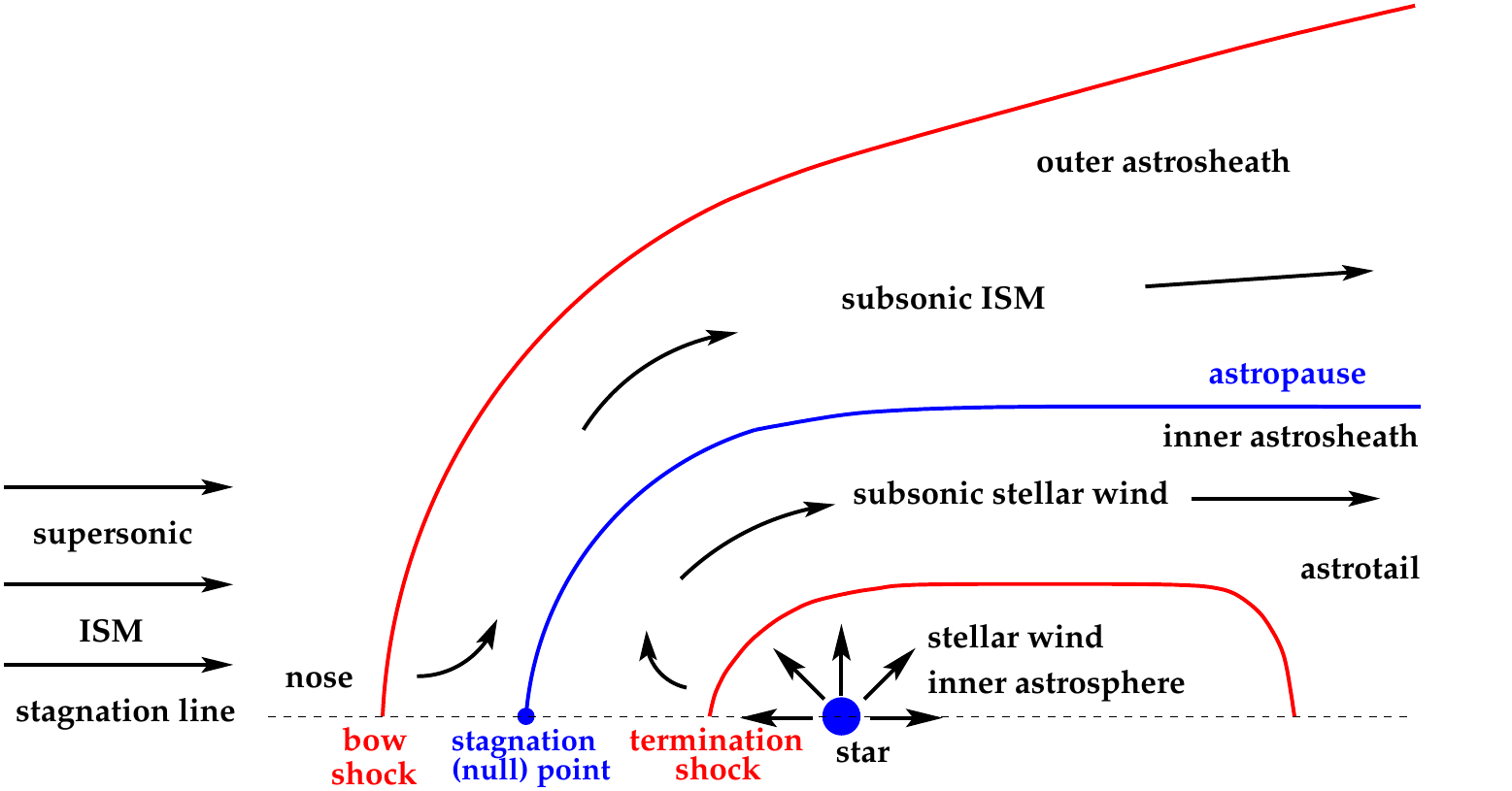}
      \caption{Classical \lq Baranov\rq~interface with three interfaces at most: the bow shock, the astropause, and the termination shock.}
      \label{fig:astrosphere}
      \end{figure}

Our goal is to investigate this counterflow scenario from the perspective of regularity. Regular 
fields have the advantage that no singularities occur. The regularity and structural stability 
of flow configurations with magnetic and flow fields are important for the definition 
of quasi-steady separating surfaces between two flows. Separatrices or separatrix surfaces must appear where topologically disjoint field lines and/or streamlines occur, making it possible to distinguish between separated regions of plasmas and their corresponding dynamics. 
Specific separatrix surfaces that form are, for example, when stars with their expanding winds impinge on the ISM flow \citep[astropause,][]{1961ApJ...134...20P} or when stellar winds encounter the magnetospheres of planets \citep[magnetopause,][]{1963Sci...141..130H}. Magnetic reconnection sites, for example helmet streamers in the solar corona \citep{1998SoPh..180..439W, 2000SoPh..191..391W}, eruptive flare--domains in stellar coronae, often called magnetic domes or fans \citep{2008ApJ...675.1656P}, or flux transfer events in planetary magnetospheres \citep{1978SSRv...22..681R}, are also configurations that depend on typical, \lq generic\rq~or at least structurally stable fields, developing magnetic separatrices. We do not aim to describe reconnection processes but focus on relaxed, stationary flows and fields. This approach is justified because not every non-ideal or resistive dynamics automatically leads to magnetic reconnection \citep[see, e.g.,][]{1998JPlPh..59..303T, 2000JPlPh..64..601T, 2000PhPl....7.3542T, 2014A&A...569A..44N}.

Global topological and geometrical investigations that describe the (tube-like or tail-like) shapes of astrospheres have been performed, for example by \citet{2022A&A...666A.197C}.  
Our investigation focuses primarily 
on the local scenario in the vicinity of the apex of the astrosphere and on the question under which conditions the velocity and magnetic null co-locate. While \citet{2008ASTRA...4....7N} proposed co-location of the 
two nulls in ideal and quasi-ideal\footnote{This means that the non-idealness can be written at least locally as a gradient (e.g. as the gradient of the electron pressure).} MHD using only Ohm's law, the current work proves this finding in a solid way by including the momentum equation as well. Furthermore, we 
extend the analysis by \citet{2008ASTRA...4....7N} to general non-ideal and resistive 3D MHD with the aim to define constraints under which the stagnation point and magnetic null points (and hence, the separatrices) are identical at least locally.

The paper is structured as follows. In Sect.~\ref{sect:joint} we provide an overview of the macro-physical and topological needs for defining the conditions under which a stellar wind and the surrounding ISM stream are separated. We also start  
with the study under which conditions stagnation points in hydrodynamical flows
and null points of the pressure gradient must co-locate.
We extend the analysis to MHD in Sect.~\ref{sect:coloc}, where we prove 
that for a vanishing gradient of the total pressure and a co-located either magnetic null point or stagnation point, the two vector fields must have a null point at the same location when the fields are 3D (Sect.~\ref{sect:theorem}). We further present the constraints for the co-location and dislocation of the nulls for resistive MHD in Sects.~\ref{sect:resistive} and \ref{sect:disloc}, respectively. We briefly discuss Hall MHD in Sect.~\ref{sect:Hall} and the case that the non-ideal term can be written as a quasi-linear operator of the velocity and magnetic field (Sect.~\ref{sect:general}). We provide a summary of the conditions found in Sect.~\ref{sect:disc} and discuss their relevance. We also caution about the mismatch of null and stagnation points in (quasi-)ideal MHD simulations.

\section{Joint or disjoint -- the need for separatrix surfaces for the existence of \lq pauses\rq}
\label{sect:joint}

\subsection{Disjoining surfaces or shock surfaces?}

The existence of null points of vector fields is
required for the spanning of separating surfaces. These surfaces 
form a boundary so that
the topologically disjoint field lines of the corresponding vector fields exist on each
side of the separatrix. To analyse the structure of fields with null points,
we focused on the stationary approximation. 

In fluid dynamics, there are two types of interfaces: shocks and separatrices. However, there is a significant difference between them:
shocks allow for the penetration of plasma flows across the shock surface into regions of identical stream
and magnetic connectivity and vice versa, while a
separatrix guarantees that there is a boundary that separates field lines
(e.g. magnetic field lines or streamlines) into regions of different
connectivity.

In addition to the topological perspective, the question arises which other physical
constraint can require the existence of null points. One possibility is the
course of the pressure, driving stationary counterstreaming MHD flows, for example interstellar
medium flows and the outer stellar wind flows. This pressure cannot be generally assumed to be purely monotonous.
For non-monotonous pressure distributions, there must exist extrema\footnote{In the case of a flow around an obstacle, the matter can be decelerated, causing a stagnation of the flow and an accumulation of matter in front of the obstacle.
This stagnation therefore creates an extremum of the mass density, and the gradient of the density 
vanishes, implying that in barotropic flows, the pressure
gradient vanishes as well.
The case of a saddle point of the pressure cannot be completely excluded. 
However, if a 
conservation law such as the theorem of Bernoulli's principle applies, such as the
barotropic case for typical barotropic laws, the relation between pressure and density is
monotonic and regular. Even for other physically relevant cases, for instance in the
limit of incompressible dynamics, the total pressure must be constant along
streamlines, and therefore, if the kinetic pressure is zero, the thermal pressure
must be at maximum.}, such that the
pressure gradient is zero at some points within a specific domain. For a purely
ideal hydrodynamical (HD) problem, we show in the following that the existence of an
extremum (usually a local maximum) of the thermal or plasma pressure at a
certain point in the generic three-dimensional case automatically implies that
this point is also a stagnation point.  
To prove  this statement, we demand the regularity of all involved fields and their derivatives.

\subsection{The existence of stagnation points}
\label{sect:HD}

      \begin{figure}
      \centering
       \includegraphics[width=0.49\columnwidth]{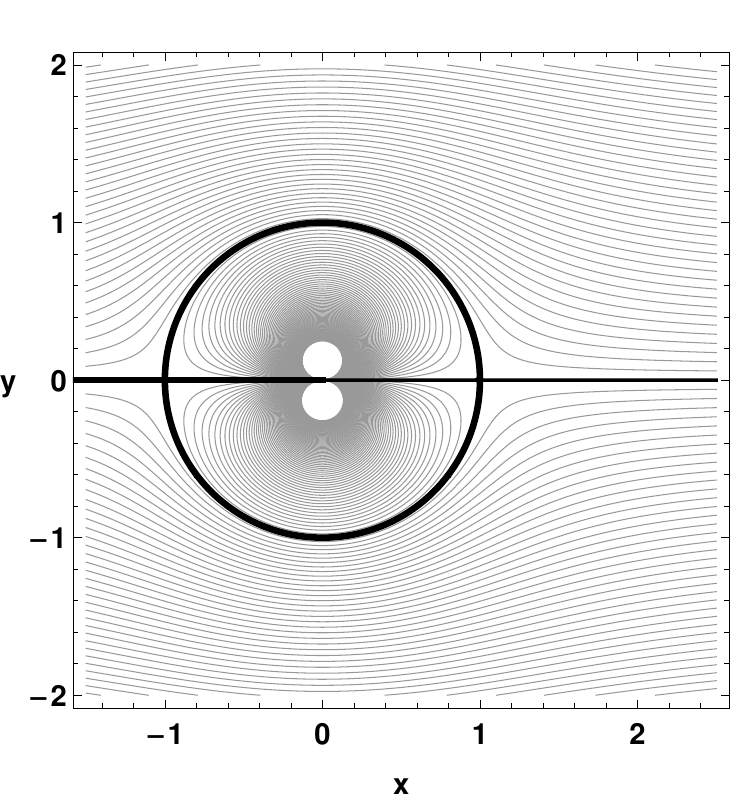}
       \includegraphics[width=0.49\columnwidth]{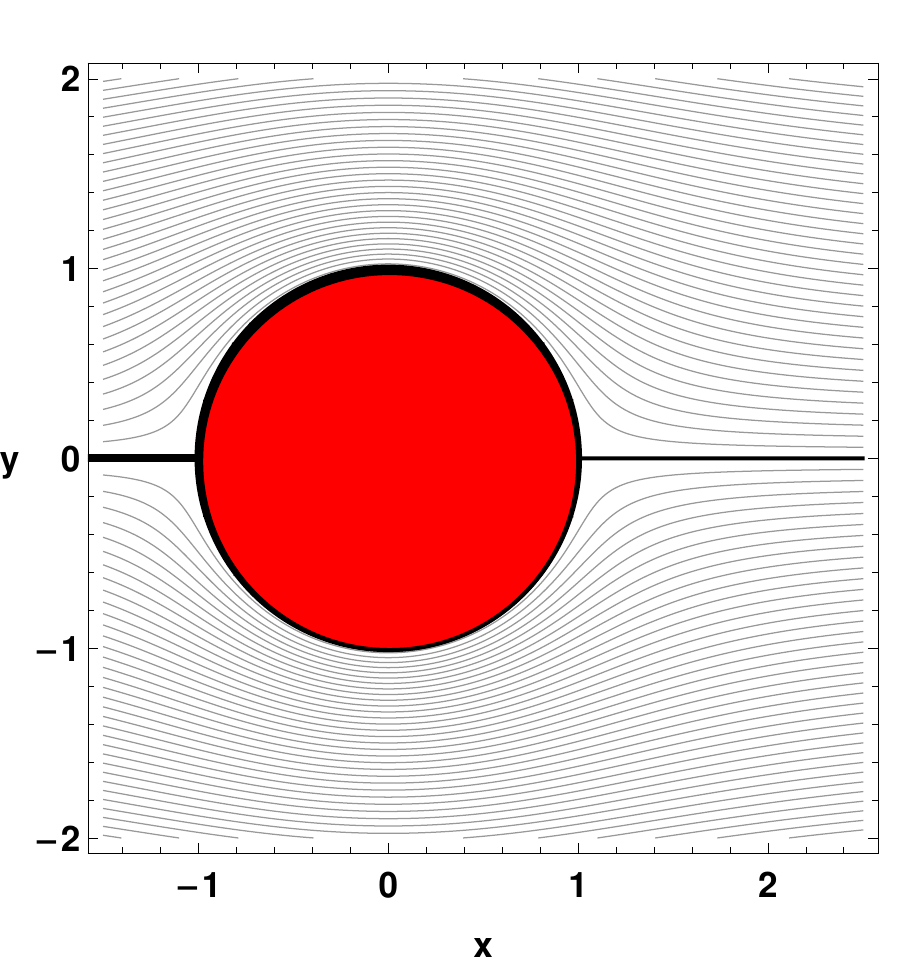}\\
       \includegraphics[width=0.49\columnwidth]{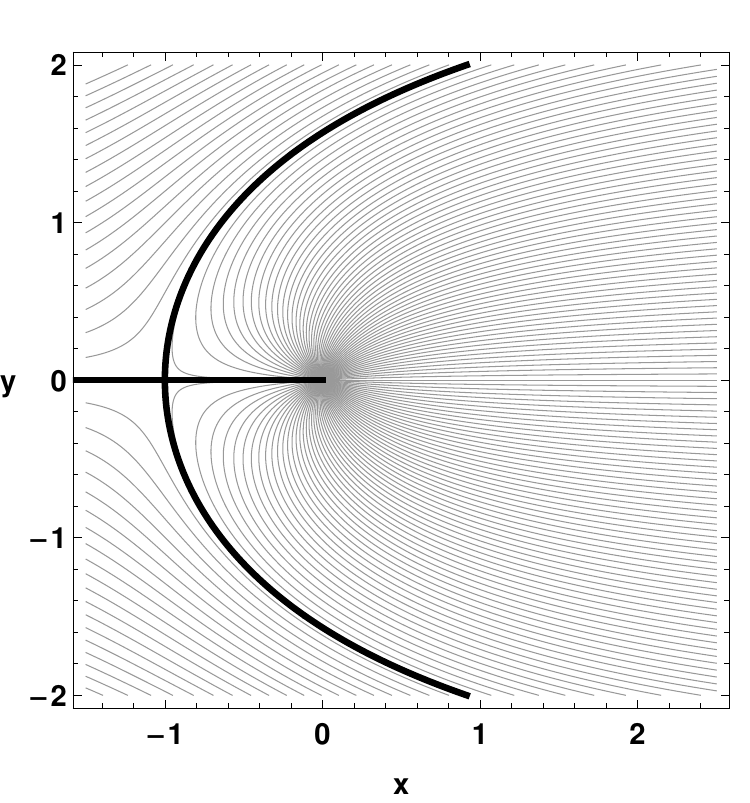}
       \includegraphics[width=0.49\columnwidth]{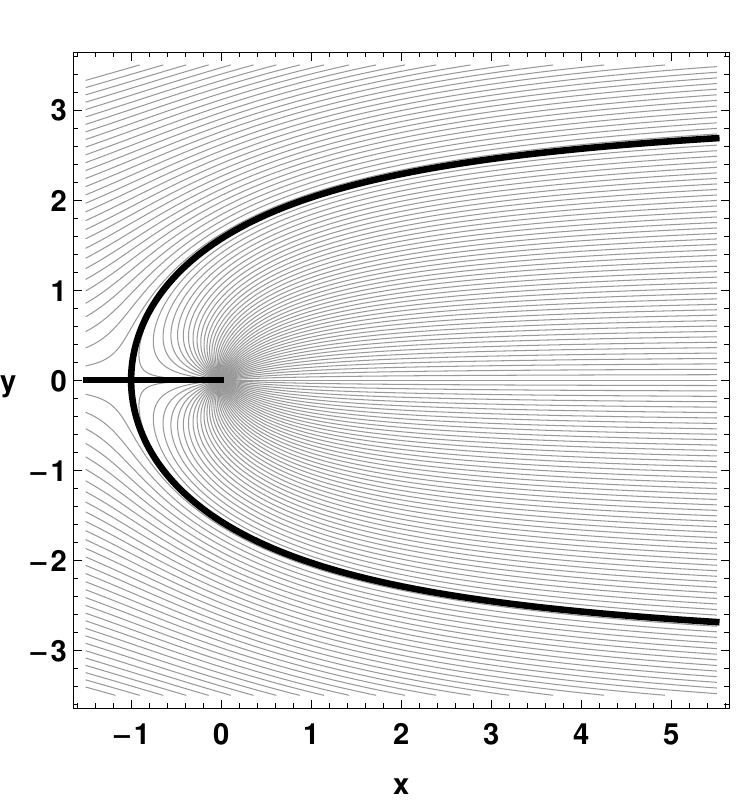}
       \caption{2D flows around obstacles with stagnation lines and separatrix curves (thick solid lines). In all panels, a homogeneous flow approaches from the left and streams around the obstacle. The obstacles are a dipole flow (\textit{top left}) that is substituted by a solid circular body (\textit{top right}). These are a so-called
       \lq Rankine body\rq \ or \lq finite body\rq. Two stagnation points exist that are located at $(x,y) = (\pm1,0)$. In the bottom panels, the obstacle is given by a monopole flow, resulting in a Parker-like tail, a so-called \lq Rankine half body\rq \ or \lq infinite body\rq. In this case, only one stagnation point exists at $(x,y) = (-1,0)$. The bottom right image extends to larger distances to illustrate the formation of a tail with finite extension in the y-direction.}
       \label{fig:sketch}
       \end{figure}

The existence of a separatrix is a topological criterion for the determination of a \lq pause\rq. 
Separatrices form when a flow encounters an obstacle, which can be either a solid body or a counterflow (e.g. in form of a stellar wind or a planetary magnetosphere). At the location of the encounter, the driving forces of the flow and counterflow equalise and the flows have to decelerate,
leading to an isolated null-point of the flow. This null point is an intersection of a stagnation line (or curve) with another stagnation curve (the separatrix curve in 2D) or with a stagnation surface (the separatrix surface in 3D). In astrospheric physics, this separatrix surface is named astropause\footnote{In magnetic reconnection theory or magnetic topology where the 3D null point lies in its 2D subspace, the tangential subspace of the magnetopause is called a fan or fan surface \citep[see e.g. the reviews by][]{2005LRSP....2....7L, 2022LRSP...19....1P}.}.
A finite object (a so-called \lq Rankine body\rq) has at least two null-points\footnote{This statement is only true when in the case of a laminar potential flow around a cylinder, for instance, that is, a circle-like shaped solid object, there is no component of the flow with a circulation that reaches the critical value.}  \citep[e.g.,][]{1982MNRAS.202..735N, 1983A&A...118...57F} and an infinite object (so-called \lq Rankine half body\rq) at least one \citep{1961ApJ...134...20P}. Examples of 2D flow configurations are shown in Fig.\,\ref{fig:sketch}.

The physical reason for the presence of an isolated null point is the existence of an isolated stationary point $\bm{x}_0$, that is, a point at which $\bm{\nabla} 
p(\bm{x}_0)=\bm{0}$, such as an extremum of a scalar pressure $p$. 
Considering the momentum equation 
\begin{equation}
\rho(\bm{v\cdot\nabla})\bm{v} = -\bm{\nabla}p + \bm{\nabla\cdot\sigma}\, ,
\label{eq:euler0}
\end{equation}
we obtain for the case of ideal HD, for which the Navier-Stokes force term $\bm{\nabla\cdot\sigma} = \bm{0}$, at a stationary point of the pressure (provided that the density $\rho \neq 0$),
\begin{equation}
\bm{\nabla} p = -\rho(\bm{v\cdot\nabla})\bm{v}=\bm{0} \quad \Rightarrow \quad (\bm{v\cdot\nabla})\bm{v}=\bm{0} .
\label{eq:druck}
\end{equation}
To decompose the velocity vector $\bm{v}$, 
we used the Frenet-Serret formula $\frac{d \bm{e}_{t}}{dt} \equiv \kappa\bm{e}_{n}$, where $1/\kappa=\varrho$ is the radius of curvature, that is, $\kappa$ is the curvature, $\bm{e}_{n}$ is the unit vector in normal direction, and $\bm{e}_{t}=\bm{v}/{|\bm{v}|}$ is the tangential unit vector with $|\bm{v}|\equiv v_{t}$ \citep{Klingenberg}. 
The indices $t$ and $n$ describe the arc length and the normal direction, respectively.
With this, the right-hand side of the momentum equation, Eq. (\ref{eq:druck}), can be written in the following form:
\begin{eqnarray}
&& \left(v_{t}\bm{e}_{t}\bm{\cdot e}_{t}\,\frac{d}{dt} \right)\left(v_{t}\bm{e}_{t}\right)
= v_{t}^2\frac{d\bm{e}_{t}}{dt} + v_{t}\frac{d v_{t}}{dt}\bm{e}_{t} ={\mathbf 0}
\nonumber\\
\Rightarrow\,\, && v_{t}^2\kappa\bm{e}_{n}+v_{t}\frac{d v_{t}}{dt}\bm{e}_{t}={\mathbf 0} .
\end{eqnarray}
To fulfil this last form of the momentum equation, both terms on the left-hand side must vanish independently. Because for general streamlines $\bm{e}_{n}$ and $\bm{e}_{t}$ are regular and the curvature is finite and non-zero, this is only possible if $v_{t}\rightarrow 0$. 
Hence, at the stationary point of the pressure, there must be an isolated null point of the flow, that is, a stagnation point.

For the pure ideal HD case, we recognise from Eq.~\eqref{eq:druck} that
\begin{equation}
  \bm{\nabla} p(\bm{x}_0) = \bm{0} = - \rho(\bm{x}_0)\,\left(\bm{\nabla v}(\bm{x}_0)
    \right)\bm{v}(\bm{x}_0)\, \, ,
\end{equation} 
implying that $\bm{v}(\bm{x}_0)=\bm{0}$, as far as the mass density $\rho$
is non-zero and the rank of the velocity Jacobian matrix 
$\left(\bm{\nabla v}(\bm{x}_0) \right)$ is 3. This means that due to the linear independence, only the trivial solution $\bm{v}(\bm{x}_0)=\bm{0}$ is
possible\footnote{If the rank of the Jacobian is 2 or smaller, then there are solutions where not all components of the velocity necessarily have to vanish, sometimes so-called 2.5D solutions, such that there are no stagnation points at the position of the extremum.}. 
Thus, we conclude that a pressure extremum, together with a non-degenerate velocity field, implies a stagnation point\footnote{An analogy can be drawn from magnetohydrostatic (MHS)
theory, where the momentum equation is given by $\bm{\nabla}\Pi_{pB}=\left(\bm{\nabla B} \right) \bm{B}$.}.

This conclusion even holds for non-ideal HD, provided that the Navier-Stokes force term $\bm{\nabla\cdot\sigma}$ in Eq.~\eqref{eq:euler0} satisfies $\bm{\nabla\cdot\sigma} = f\bm{v}$,
where $f$ is either a constant or a function. An extension is the case of a regular bounded operator, probably written as a Jacobian matrix of $\bm{v}$, leading to a similar argument that we take up again below. 

\section{The (co-)location of null points}
\label{sect:coloc}

In the following, we show that under ideal constraints, that is, in analogy to the pure hydrodynamical case, the locations of the null points of the magnetic field, the velocity field, and the pressure gradient are identical. We mainly discuss the first-order momentum equations, consisting of the  centre of mass (CM) equation and the ideal or non-ideal Ohm's law as a relic of the electron momentum equation. The mass-continuity equation is considered only indirectly, where we assume that
if $\bm{v}$ is the CM velocity, then the stationary mass conservation $\bm{\nabla \cdot} (\rho \bm{v}) = 0$ is used.

\subsection{The basic MHD equations and the pressure extremum condition}
\label{sect:theorem}

We used normalised parameters, for which we introduced the normalisation constants $\hat{B}$, 
$\hat{\rho}$, $\hat{l}$, $\hat{p}$, and $\hat{v}_A$, where $\hat{v}_A=\hat{B}/\sqrt{\mu_{0}\hat{\rho}}$
is the normalised Alfv\'en velocity, and $\mu_{0}$ is the magnetic permeability in a classical vacuum. Let $\bm{v}$ be the plasma velocity normalised on 
$\hat{v}_A$, $\rho$ the mass density normalised on $\hat{\rho}$,
$\bm{j}=\bm{\nabla}\times\bm{B}$ the current density vector normalised on
$\hat{B}/(\mu_{0}\hat{l})$ with $\hat{l}$ as the characteristic length scale, and
$p$ the scalar plasma pressure normalised on
$\hat{p}=\hat{B}^2/\mu_{0}$.

With these definitions, we can formulate the set of normalised equations of stationary MHD. The momentum equation can then be written as
\begin{equation}
    \bm{\nabla} p =\left(\bm{\nabla}\times \bm{B}\right)\times \bm{B}\ -
    \rho \,  \left( \bm{v\cdot\nabla} \right) \bm{v}.
    \label{eq:momEqn}
\end{equation}
The pressure driving the stationary flow $\bm{v}$ in the case of ideal MHD is the total pressure $\Pi_{pB}$, consisting of the thermal and the magnetic pressure, 
\begin{equation}
   \Pi_{pB}:=p+\frac{\bm{B}^2}{2}.
   \label{eq:totPres}
\end{equation}
We investigated the conditions under which the co-location of the velocity null point and the magnetic null point can be identified a priori. The analysis is local and is formulated on a domain $\mathcal{D}$ in which the fields are assumed to be regular.
\begin{theorem}[Local co-location of non-degenerate nulls]
    Let $\bm{v},\bm{B} \in C^{1}(\mathcal{D};\mathbb{R}^{3})$ and let $ \bm{x}_{0} \in \mathcal{D} $ satisfy $ \bm{\nabla}\Pi_{pB}(\bm{x}_{0}) = \bm{0}$. If $\bm{v}(\bm{x}_{0})=\bm{0} $ and $\left( \bm{\nabla}\bm{B} \right)(\bm{x}_{0})$ is non-singular, then $\bm{B}(\bm{x}_{0})=\bm{0}$. Conversely, if $\bm{B}(\bm{x}_{0})=\bm{0} $, and  $\left( \bm{\nabla}\bm{v} \right)(\bm{x}_{0})$ is non-singular, then $\bm{v}(\bm{x}_{0})=\bm{0} $.
    \label{theorem1}
\end{theorem}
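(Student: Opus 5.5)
The plan is to rewrite the momentum equation \eqref{eq:momEqn} in terms of the total pressure $\Pi_{pB}$ from \eqref{eq:totPres}, so that it becomes quasi-linear in $\bm{B}$ and $\bm{v}$, mirroring the hydrodynamic argument of Sect.~\ref{sect:HD}. Using the vector identity $(\bm{\nabla}\times\bm{B})\times\bm{B} = (\bm{B\cdot\nabla})\bm{B} - \bm{\nabla}(\bm{B}^2/2)$, we move the magnetic pressure to the left-hand side and obtain
\begin{equation*}
\bm{\nabla}\Pi_{pB} = (\bm{B\cdot\nabla})\bm{B} - \rho\,(\bm{v\cdot\nabla})\bm{v} = \left(\bm{\nabla B}\right)\bm{B} - \rho\left(\bm{\nabla v}\right)\bm{v},
\end{equation*}
where $(\bm{\nabla B})$ and $(\bm{\nabla v})$ denote the Jacobian matrices acting on the respective vectors, as in the MHS analogy noted in the footnote of Sect.~\ref{sect:HD}. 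This identity holds pointwise on $\mathcal{D}$ because $\bm{v},\bm{B}\in C^1$, so all terms are continuous and in particular well defined at $\bm{x}_0$.

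Next we evaluate this identity at $\bm{x}_0$, using $\bm{\nabla}\Pi_{pB}(\bm{x}_0)=\bm{0}$. For the first assertion, $\bm{v}(\bm{x}_0)=\bm{0}$ kills the term $\rho(\bm{\nabla v})\bm{v}$ at $\bm{x}_0$, irrespective of the value of $\rho(\bm{x}_0)$ and of the velocity Jacobian there. What remains is the homogeneous linear system $(\bm{\nabla B})(\bm{x}_0)\,\bm{B}(\bm{x}_0)=\bm{0}$. Since the $3\times3$ matrix $(\bm{\nabla B})(\bm{x}_0)$ is non-singular, i.e.\ of rank 3, the kernel is trivial and $\bm{B}(\bm{x}_0)=\bm{0}$.

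The converse is the symmetric argument. If $\bm{B}(\bm{x}_0)=\bm{0}$, the magnetic term drops and we are left with $\rho(\bm{x}_0)(\bm{\nabla v})(\bm{x}_0)\,\bm{v}(\bm{x}_0)=\bm{0}$. Here we must divide by $\rho(\bm{x}_0)$, so we need the standing physical assumption $\rho(\bm{x}_0)\neq 0$, as already used in Sect.~\ref{sect:HD}. I would state this explicitly, since the theorem's hypotheses leave it implicit. Invertibility of $(\bm{\nabla v})(\bm{x}_0)$ then yields $\bm{v}(\bm{x}_0)=\bm{0}$.

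The only delicate points are these. First, the vector identity and the Jacobian notation must match: $(\bm{B\cdot\nabla})\bm{B}$ equals the Jacobian with entries $\partial_j B_i$ applied to $\bm{B}$, and non-singularity is the same for that matrix and its transpose. Second, the positivity of $\rho$ in the converse direction must be assumed. No stationarity of the induction equation or the continuity equation is needed; the result is purely algebraic, pointwise at $\bm{x}_0$.
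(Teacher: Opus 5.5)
Your proposal is correct and follows the paper's proof essentially step by step: rewrite the momentum equation via the total pressure as $\bm{\nabla}\Pi_{pB}=(\bm{\nabla B})\bm{B}-\rho(\bm{\nabla v})\bm{v}$, evaluate it at $\bm{x}_0$, and use the trivial kernel of the non-singular Jacobian. Your explicit flagging of $\rho(\bm{x}_0)\neq 0$ is apt, since the paper's proof also relies on it in the converse direction without listing it among the theorem's hypotheses.
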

\begin{proof}
    The momentum equation, Eq.~\eqref{eq:momEqn}, can be rewritten in terms of the total pressure, Eq.~\eqref{eq:totPres} as
    \begin{equation}
        \bm{\nabla}\Pi_{pB} = \left(  \bm{B\cdot\nabla} \right)\bm{B} - \rho \left(  \bm{v\cdot\nabla} \right)\bm{v}.
        \label{eq:totPresEqn}
    \end{equation}
    Let $\bm{x}_{0} \in \mathcal{D}$ be a stationary point of $\Pi_{pB}$. At this point, Eq.~\eqref{eq:totPresEqn} reduces to
    \begin{equation}
        \left(  \bm{B\cdot\nabla} \right)\bm{B} - \rho \left(  \bm{v\cdot\nabla} \right)\bm{v} = 0.
        \label{eq:presExtrem}
    \end{equation}
    If $\left( \bm{\nabla}\bm{B} \right)$ and $\left( \bm{\nabla}\bm{v} \right)$ are the Jacobian matrices of the magnetic and the velocity field, respectively, Eq.~\eqref{eq:presExtrem} can be equivalently written as
    \begin{equation}
        \left( \bm{\nabla}\bm{B} \right)\bm{B} - \rho \left( \bm{\nabla}\bm{v} \right)\bm{v} = \bm{0}.
        \label{eq:presExtrem2}
    \end{equation}
    Thus, assuming that $\bm{v}(\bm{x}_{0}) = \bm{0}$, Eq.~\eqref{eq:presExtrem2} evaluated at $\bm{x}_{0}$ becomes
    \begin{equation}
        \left( \bm{\nabla}\bm{B}(\bm{x}_{0}) \right)\bm{B}(\bm{x}_{0}) = \bm{0}.
        \label{eq:Bnull}
    \end{equation}
    Since $\left( \bm{\nabla}\bm{B}(\bm{x}_{0}) \right)$ is non-singular, Eq.~\eqref{eq:Bnull} implies $ \bm{B}(\bm{x}_{0}) = \bm{0} $.
    Conversely, assuming $ \bm{B}(\bm{x}_{0}) = \bm{0} $ and evaluating Eq.~\eqref{eq:presExtrem2} at $\bm{x}_{0}$ gives
    \begin{equation}
        \rho(\bm{x}_{0}) \left( \bm{\nabla}\bm{v}(\bm{x}_{0}) \right)\bm{v}(\bm{x}_{0}) = \bm{0}.
        \label{eq:Vnull}
    \end{equation}
    As $\rho(\bm{x}_{0}) \neq 0$ and $\left( \bm{\nabla}\bm{v}(\bm{x}_{0}) \right)$ is non-singular, Eq.~\eqref{eq:Vnull} implies $\bm{v}(\bm{x}_{0}) = \bm{0}.$ This proves both implications. 
\end{proof}

A special case of the aforementioned is when the flow can be treated as incompressible, that is, $\bm{\nabla\cdot v} = 0$. Then, Eq.~\eqref{eq:presExtrem}
can be rewritten by substituting $\bm{w}:=\sqrt{\rho}\, \bm{v}$, 
  \begin{eqnarray}
    \left( \bm{B\cdot\nabla} \right)  \bm{B} -  \rho\,  \left( \bm{v \cdot\nabla} \right)
\bm{v} \equiv\left(\bm{\nabla B} \right) \bm{B} - \left(\bm{\nabla w}
    \right)\bm{w} = \bm{0}  \, . \label{euler4}
\end{eqnarray}
If the rank of the two Jacobians is three, the magnetic and flow null must be co-located, which emphasises the symmetry between the flow and magnetic field topology.

The result also shows that the non-degeneracy condition is essential. If the rank of the relevant Jacobian is smaller than three, the corresponding linear system can possess non-trivial solutions, and the vanishing of the total-pressure gradient does not necessarily force the other field to vanish. In this case, the magnetic and velocity null points need not be co-located. If in the entire local domain $\mathcal{D}$ only one extremum of the total pressure $\Pi_{pB}$ exists along with a null point of either the magnetic or the velocity field, then there exists only one null point of both vector fields, and the nulls are trivially identical. Our results for the conditions of co-location for the different cases of ideal fluids are summarised in Table~\ref{tab:colocation}.

The result is formulated in the domain $\mathcal{D}$, where we assumed that the fields are regular and differentiable. Consequently, they can be expanded linearly about some point $\bm{x}_{0}$. Taking $\bm{x}_{0} = \bm{0}$, we have the representations
\begin{equation}
    \bm{v} = \left( \bm{\nabla}\bm{v} \right)\bm{r} + R_{v}(\bm{r}) \quad \text{and} \quad \bm{B} = \left( \bm{\nabla}\bm{B} \right)\bm{r} + R_{B}(\bm{r}),
\end{equation}
where $R_{v}$ and $R_{B}$ denote the corresponding remainder functions associated with higher-order contributions. Considering co-location, it is possible to relate the topology of the magnetic field and flow dynamics through a parametric analysis of the eigenstructure of the associated fields \citep{2000PhPl....7.3542T,2026saddal}.

{\small
\begin{table}
\caption{Conditions for ideal fluids under which the null points of the flow and magnetic field co-locate at a point $x_{0}$.}
\label{tab:colocation}
\centering
\begin{tabular}{llll}
\hline \hline
Case & Given & Rank condition & Conclusion \\
\hline
HD    & $\bm{\nabla}p = \bm{0}$         & $\mathrm{rank}\left(\bm{\nabla v}\right)=3$ & $\bm{v}=\bm{0}$ \\
       & $\bm{\nabla}p = \bm{0}$        & $\mathrm{rank}\left(\bm{\nabla v}\right)<3$ & n.n. \\
\hline
MHS        & $\bm{\nabla}\Pi_{pB} = \bm{0}$ & $\mathrm{rank}\left(\bm{\nabla B}\right)=3$ & $\bm{B}=\bm{0}$ \\
           & $\bm{\nabla}\Pi_{pB} = \bm{0}$ & $\mathrm{rank}\left(\bm{\nabla B}\right)<3$ & n.n. \\
\hline
MHD   & $\bm{\nabla}\Pi_{pB} = \bm{0}$, $\bm{v}=\bm{0}$ & $\mathrm{rank}\left(\bm{\nabla B}\right)=3$ & $\bm{B}=\bm{0}$ \\
           & $\bm{\nabla}\Pi_{pB} = \bm{0}$, $\bm{v}=\bm{0}$ & $\mathrm{rank}\left(\bm{\nabla B}\right)<3$ & n.n. \\
           & $\bm{\nabla}\Pi_{pB} = \bm{0}$, $\bm{B}=\bm{0}$ & $\mathrm{rank}\left(\bm{\nabla v}\right)=3$ & $\bm{v}=\bm{0}$ \\
           & $\bm{\nabla}\Pi_{pB} = \bm{0}$, $\bm{B}=\bm{0}$ & $\mathrm{rank}\left(\bm{\nabla v}\right)<3$ & n.n. \\
\hline
\end{tabular}
\tablefoot{Entries termed n.n. (not necessarily) refer to cases where the corresponding null does not need to exist at that location.} 
\end{table}
}

\subsection{The co-location in the case of resistive MHD}
\label{sect:resistive}

We so far neglected the second momentum equation, namely the ideal, or
here, the resistive Ohm's law $\bm{E} +  \bm{v}\times \bm{B}= \eta \bm{j}$,
where $\bm{E}$ and $\bm{j}$ are the electric field and electric current density, respectively, 
and $\eta$ is the resistivity.
Combined with the (stationary) Faraday's induction equation $\bm{\nabla}\times
\bm{E}= - \partial \bm{B} /\partial t = \bm{0} $ and using
vector analysis identities, we obtain
\begin{eqnarray}
\bm{E} +  \bm{v}\times \bm{B}= \eta \bm{j}\,\,& \Rightarrow & \,\, \bm{\nabla}\times\left( \bm{v}
\times \bm{B} \right)= \bm{\nabla}\times (\eta \bm{j})
\label{ohm1}
.\end{eqnarray}
This equation is termed the induction equation of resistive MHD.
 \citet{2008ASTRA...4....7N} proved that for $\eta=0$, the clear
conclusion can be drawn directly from Eq.~\eqref{ohm1} that a stagnation
point must be a magnetic null point and vice versa because of the structure of
the ideal Ohm's law and the condition that all fields are three-dimensional. 
When we assume that $\eta \bm{j}$ can be written as a gradient of a potential $\phi$,
\begin{equation}
    \eta \bm{j} = -\bm{\nabla}\phi,
    \label{eq:potential}
\end{equation}
the system is quasi-ideal because the curl of the 
total electric field vanishes identically \citep[see, e.g.,][]{1998JPlPh..59..303T, 2000JPlPh..64..601T,
2000PhPl....7.3542T, 2014A&A...569A..44N}. Then, we have exactly the same case
as in \citet{2008ASTRA...4....7N}, namely the automatic co-location of the velocity and magnetic field nulls. 
That the resistive term can be written completely as a gradient can be regarded as too specific, 
although it is physically possible. It should be considered, however, that for 
the co-location constraint, it would also suffice if $\eta \bm{j}$ can be written 
in the form of Eq.~\eqref{eq:potential} in the immediate vicinity of the stagnation point or magnetic null point. 

\subsection{Dislocation of the null points in resistive MHD}
\label{sect:disloc}

We next estimated the typical separation between the magnetic null and the velocity null when they are dislocated in the linear regime. The induction equation can be written as
\begin{equation}
    \nabla\times(\bm{v}\times\bm{B})
    =
    (\bm{B}\cdot\nabla)\bm{v}
    -(\bm{v}\cdot\nabla)\bm{B}
    -(\nabla\cdot\bm{v})\bm{B}
    =
    \nabla\times\bm{N}.
\end{equation}
When we define $\mathcal{M}:=\nabla\bm{v}-(\nabla\cdot\bm{v})\mathcal{E}$ with the unit matrix $\mathcal{E}$, the evaluation at the velocity null $\bm{x}_s$, where $\bm{v}(\bm{x}_s)=0$, gives
\begin{equation}
    \left.\mathcal{M}\bm{B}\right|_{\bm{x}_s}
    =
    \left.\nabla\times\bm{N}\right|_{\bm{x}_s}.
\end{equation}
For a non-degenerate stagnation point, $\mathcal{M}$ is invertible, and hence, $\bm{B}(\bm{x}_s)=\mathcal{M}^{-1}(\nabla\times\bm{N})|_{\bm{x}_s}$. Expanding the magnetic field linearly about the stagnation point and evaluating at the magnetic null $\bm{x}_B$ gives $\bm{0}\simeq\bm{B}(\bm{x}_s)+(\nabla\bm{B})_{\bm{x}_s}\bm{d}$, where $\bm{d}:=\bm{x}_B-\bm{x}_s$. Thus,
\begin{equation}
    \bm{d}
    \simeq
    -(\nabla\bm{B})^{-1}
    \mathcal{M}^{-1}
    \left.(\nabla\times\bm{N})\right|_{\bm{x}_s}.
\end{equation}
For $\bm{N}\sim\eta\bm{j}$ and constant $\eta$, $\nabla\times\bm{N}=-\eta\nabla^2\bm{B}$ so that $\lvert\nabla\times\bm{N}\rvert\sim\eta B/L^2$, while $\lvert\mathcal{M}\rvert\sim v/L$ and $\lvert\nabla\bm{B}\rvert\sim B/L$. Consequently, the characteristic null-point separation is
\begin{equation}
    |\bm{d}|\sim\frac{\eta}{v},
    \qquad
    \frac{|\bm{d}|}{L}\sim\frac{\eta}{vL}
    =\frac{1}{R_m},
\end{equation}
where $R_m=vL/\eta$ is the magnetic Reynolds number. Thus, for systems with $R_m\gg1$, such as the heliosphere, the displacement of the magnetic and velocity nulls is small compared to the characteristic macroscopic length scale $|\bm{d}|\ll L$.

\subsection{A brief discussion of Hall MHD}
\label{sect:Hall}

Instead of a pure resistive term in Eq.~\eqref{ohm1}, we assume in the following a non-ideal term that includes the Hall term and the electron pressure ($p_e$), neglecting effects of anisotropy. Because of the quasi-neutrality, we assumed that $n_i \simeq n_e \simeq n$, $\bm{v_i} \simeq \bm{v}$ and took in addition the definition equation for the total current of ions and electrons. Then the total electric fields seen by the electrons and ions are given by
\begin{eqnarray}
\bm{E} +  \bm{v}_e\times \bm{B} & = & - \frac{1}{n e}\bm{\nabla} p_{e}  \quad \land \quad \bm{v}_e = \bm{v} - \frac{\bm{j}}{n e}  \label{tot_E_e}\\
\Leftrightarrow  \qquad  \bm{E} + \bm{v}\times \bm{B} & = & \frac{1}{n e} \left( \bm{j}\times \bm{B} - \bm{\nabla} p_{e} \right)\,.  
\label{tot_E_i}
\end{eqnarray}
The inclusion of the Hall term leads to a symmetry breaking so that co-location is not fulfilled automatically. Nevertheless, it can take place under certain circumstances.
When we consider that $p_e$ is a function of $n$ or vice versa, taking the curl of 
Eq.~(\ref{tot_E_e}) delivers the same co-location statement for 3D fields as in the ideal MHD case 
\citep{2008ASTRA...4....7N}, but only for the electron velocity and the magnetic field. A discussion of further possibilities for co-location and dislocation with respect to the velocity field $\bm{v}$ is presented in Appendix~\ref{sect:app}.

\subsection{The general non-ideal case}
\label{sect:general}

A generalisation of Ohm's law can be carried out by replacing the resistive term in Eq.~\eqref{ohm1} by a general non-ideal term $\bm{N}$. Applying the Helmholtz theorem, we can write this term as the sum of the gradient of a potential $\phi_{{\rm e,total}} $ and the curl of a vector potential $\bm{W}$ such that $\bm{N}:=-\bm{\nabla}
\phi_{{\rm e,total}} + \bm{\nabla}\times \bm{W}$. Hence, the induction equation can then be formulated as
\begin{equation}
   \left(\bm{\nabla v} \right)   \bm{B} - \left(\bm{\nabla\cdot v}
    \right)  \bm{B} - \left(\bm{\nabla B}
  \right)  \bm{v} =
  -\Delta  \bm{W} \, ,
  \label{ohm3}
  \end{equation}
as far as $\bm{W}$ satisfies the Coulomb gauge. We note that only the solenoidal part of $\bm{N}$ enters Eq.~(\ref{ohm3}), while the gradient contribution $-\bm{\nabla}\phi_{{\rm e,total}}$ is annihilated by the curl and is therefore invisible to the stationary Faraday's law. Consequently, the case $\Delta\bm{W}=\bm{0}$ recovers the quasi-ideal case discussed in Sect.~\ref{sect:resistive}.

The condition $\Delta\bm{W}=\bm{0}$ is very restrictive and implies that each Cartesian component of $\bm{W}$ is harmonic.
Consequently, $\bm{W}$ satisfies the mean-value property: for any ball $B_{r}(\bm{x}_0)$ contained in the domain,
\begin{equation}
\bm{W}(\bm{x}_0) = \frac{1}{|\partial B_r|}
\int_{\partial B_r} \bm{W}(\bm{x}) dS\, .
\end{equation}
Thus, the value of the field at $\bm{x}_0$ is equal to the surface average of the field over any sphere centred
at $\bm{x}_0$. In this sense, a vanishing Laplacian characterises a local balance of the field with respect to its surroundings: there is no net diffusive imbalance at $\bm{x}_0$. Importantly, this condition does not imply that $\bm{W}$ is locally constant because $\bm{\nabla W}$ might remain non-zero. Rather, the local value of $\bm{W}$ is constrained by the surrounding field through the mean-value property.

To obtain a consequence similar to that stated in theorem 1, it is sufficient to consider $\Delta \bm{W}(\bm{x}_0) = \bm{0}$ for a given $\bm{x}_0$ at which the analysis is continued. In this case, the diffusive contribution vanishes locally. This condition indicates that there is no net local diffusive imbalance of the field $\bm{W}$ at $\bm{x}_0$.
However, this pointwise condition should not be confused with the harmonicity of $\bm{W}$ in a neighbourhood of $\bm{x}_0$:
the mean-value property and the corresponding maximum principle require $\Delta \bm{W} = \bm{0}$ throughout an open neighbourhood (or domain), rather than only at a single point.

We studied the class of non-ideal terms for which the solenoidal part is generated linearly by the two vector fields $\bm{v}$ and $\bm{B}$ themselves,
\begin{equation}
  -\Delta \bm{W} \;=\; \mathcal{F}(\bm{r})\,\bm{v} \;+\; \mathcal{G}(\bm{r})\,\bm{B} \, ,
  \label{eq:FGansatz}
\end{equation}
where $\mathcal{F}(\bm{r})$ and $\mathcal{G}(\bm{r})$ are $3\times 3$ matrices whose components are functions of the spatial coordinates alone. Because $-\Delta\bm{W} = \bm{\nabla}\times\bm{N}$ is by
construction a curl, admissible matrix pairs $\left(\mathcal{F},\mathcal{G}\right)$ 
are restricted by the solenoidality condition
\begin{equation}
  \bm{\nabla\cdot}\left( \mathcal{F}(\bm{r})\,\bm{v} + \mathcal{G}(\bm{r})\,\bm{B} \right) = 0 \, ,
  \label{eq:FGsolenoidal}
\end{equation}
which must hold throughout the domain under consideration.

Substitution of the ansatz, Eq.~(\ref{eq:FGansatz}), into the induction equation, Eq.~(\ref{ohm3}), and collection of the terms proportional to $\bm{B}$ and $\bm{v}$ gives
\begin{equation}
  \left( \left(\bm{\nabla v}\right) - \left(\bm{\nabla\cdot v}\right)\mathcal{E}
         - \mathcal{G} \right) \bm{B}
  \;-\;
  \left( \left(\bm{\nabla B}\right) + \mathcal{F} \right) \bm{v}
  \;=\; \bm{0} \, .
  \label{eq:FGmain}
\end{equation}
The induction equation hence adopts a form that is structurally identical to Eq.~(\ref{eq:presExtrem2}). The non-ideal contributions therefore act as additive corrections to the velocity and magnetic-field Jacobians, respectively, so the argument used for the ideal case carries over unchanged.
\begin{corollary}
\label{cor:FG}
Consider the class of non-ideal terms given by Eq.~(\ref{eq:FGansatz}), and let
$\bm{v},\bm{B} \in C^{1}(\mathcal{D};\mathbb{R}^{3})$ with the components of $\mathcal{F}$ and $\mathcal{G}$ bounded and continuous on $\mathcal{D}$. If at a point $\bm{x}_{0} \in \mathcal{D}$
\begin{enumerate}
  \item[(i)] $\bm{B}(\bm{x}_{0}) = \bm{0}$ and
        $\mathrm{rank}\left[ \left(\bm{\nabla B}\right) + \mathcal{F} \right](\bm{x}_{0}) = 3$,
        then $\bm{v}(\bm{x}_{0}) = \bm{0}$;
  \item[(ii)] $\bm{v}(\bm{x}_{0}) = \bm{0}$ and
        $\mathrm{rank}\left[ \left(\bm{\nabla v}\right) - \left(\bm{\nabla\cdot v}\right)\mathcal{E}
        - \mathcal{G} \right](\bm{x}_{0}) = 3$,
        then $\bm{B}(\bm{x}_{0}) = \bm{0}$. 
\end{enumerate}
\end{corollary}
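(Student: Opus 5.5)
The argument copies the proof of Theorem~\ref{theorem1}, with the stationary induction equation, in the rearranged form Eq.~(\ref{eq:FGmain}), taking the place of Eq.~(\ref{eq:presExtrem2}). First I would check that Eq.~(\ref{eq:FGmain}) holds pointwise at $\bm{x}_0$. The hypotheses $\bm{v},\bm{B}\in C^1(\mathcal{D};\mathbb{R}^3)$ make the Jacobians $\bm{\nabla v}$, $\bm{\nabla B}$ and the divergence $\bm{\nabla\cdot v}$ continuous, hence finite, at $\bm{x}_0$. The assumption that $\mathcal{F}$ and $\mathcal{G}$ have bounded continuous components makes the ansatz Eq.~(\ref{eq:FGansatz}) meaningful at $\bm{x}_0$. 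I would also recall that Eq.~(\ref{ohm3}) comes from expanding $\bm{\nabla}\times(\bm{v}\times\bm{B})$ using $\bm{\nabla\cdot B}=0$. After substituting the ansatz, Eq.~(\ref{eq:FGmain}) is an algebraic identity between matrix-vector products, and it holds at every point of $\mathcal{D}$, in particular at $\bm{x}_0$.

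For (i), I would evaluate Eq.~(\ref{eq:FGmain}) at $\bm{x}_0$ with $\bm{B}(\bm{x}_0)=\bm{0}$. The first term vanishes because a bounded matrix multiplies the zero vector. What remains is
\begin{equation*}
\left[\left(\bm{\nabla B}\right)+\mathcal{F}\right](\bm{x}_0)\,\bm{v}(\bm{x}_0)=\bm{0}.
\end{equation*}
This $3\times3$ matrix has rank 3, so it is invertible. Its kernel is therefore trivial, which gives $\bm{v}(\bm{x}_0)=\bm{0}$.

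For (ii), with $\bm{v}(\bm{x}_0)=\bm{0}$, the second term drops out in the same way. This leaves $\left[(\bm{\nabla v})-(\bm{\nabla\cdot v})\mathcal{E}-\mathcal{G}\right](\bm{x}_0)\,\bm{B}(\bm{x}_0)=\bm{0}$. The rank-3 assumption then forces $\bm{B}(\bm{x}_0)=\bm{0}$. Unlike in Theorem~\ref{theorem1}, no factor $\rho(\bm{x}_0)\neq0$ is needed here, because the density does not appear in the induction equation.

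The only point that needs care, and the one I expect to be the main obstacle, is justifying that the ansatz can legitimately be evaluated pointwise, rather than the linear algebra itself. That $-\Delta\bm{W}$ exists and equals $\bm{\nabla}\times\bm{N}$ at $\bm{x}_0$ is part of the model assumption, Eq.~(\ref{eq:FGansatz}). The solenoidality constraint Eq.~(\ref{eq:FGsolenoidal}) only restricts which pairs $(\mathcal{F},\mathcal{G})$ are admissible, and it plays no role in the pointwise argument. I would remark that the corollary is conditional: it assumes such an admissible pair exists and makes no claim of existence. I would also note, as after Theorem~\ref{theorem1}, that if the rank drops below 3 the kernel becomes non-trivial and co-location is no longer forced.
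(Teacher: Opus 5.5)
Your proposal is correct and follows essentially the same route as the paper. The paper substitutes the ansatz into the stationary induction equation to obtain Eq.~(\ref{eq:FGmain}), notes that it is structurally identical to Eq.~(\ref{eq:presExtrem2}), and states that the proof of Theorem~\ref{theorem1} carries over unchanged; your pointwise evaluation at $\bm{x}_0$, followed by the trivial-kernel argument for the full-rank combined matrix, is that argument (including your correct remark that no condition $\rho(\bm{x}_0)\neq0$ is needed).
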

\noindent We note that $\mathcal{F}$ and $\mathcal{G}$ need not be non-singular, since only the combined matrices in Eq.~\eqref{eq:FGmain} are required to have full rank. In the following, we give several physical examples that fall within this class.
\begin{enumerate}
    \item \emph{Quasi-ideal dynamics.} If the non-idealness is a pure gradient, $\bm{N}=-\bm{\nabla}\phi_{\rm e,total}$, its curl vanishes identically and Eq.~(\ref{eq:FGansatz}) holds with $\mathcal{F}=\mathcal{G}=0$.
    \smallskip
    \item \emph{Constant resistivity in the vicinity of a linear magnetic null.} For $\bm{N}=\eta\bm{j}$ with constant $\eta$, we have $\bm{\nabla}\times\bm{N}=-\eta\,\Delta\bm{B}$. For a linear field, $\Delta\bm{B}=0$, giving $\mathcal{F}=\mathcal{G}=0$.
    \smallskip
    \item \emph{Linear force-free fields.} If $\bm{\nabla\times\bm{B}}=\alpha\bm{B}$ with constant $\alpha$ and $\eta$, then $\bm{\nabla}\times\bm{N}=\eta\alpha^{2}\bm{B}$, which has the form of Eq.~(\ref{eq:FGansatz}) with $\mathcal{G}=\eta\alpha^{2}\mathcal{E}$ and $\mathcal{F}=0$.
    \smallskip
    \item \emph{Non-constant resistivity and non-linear force-free fields.} For $\bm{\nabla\times\bm{B}}=\alpha(\bm{r})\bm{B}$, we can write $\bm{\nabla} \times (\eta \bm{j}) = \alpha \bm{\nabla} \eta \times \bm{B} +  \eta\left(\bm{\nabla}\alpha \times \bm{B} + \alpha^{2}\bm{B} \right)$, which is identical to
    \begin{equation}
        \bm{\nabla} \times (\eta \bm{j}) = \left( \alpha \mathcal{R} + \eta \mathcal{A} + \eta\alpha^{2}\mathcal{E}  \right) \bm{B},
        \label{eq:forceFree}
    \end{equation}
    where
    \begin{equation}
        \mathcal{R} =
        \begin{pmatrix}
            0 & - \frac{\partial \eta}{\partial z} & \frac{\partial \eta}{\partial y} \\
            \frac{\partial \eta}{\partial z} & 0 & - \frac{\partial \eta}{\partial x} \\
            -\frac{\partial \eta}{\partial y} & \frac{\partial \eta}{\partial x} & 0 
        \end{pmatrix},
        \quad \mathcal{A} =
        \begin{pmatrix}
            0 & - \frac{\partial \alpha}{\partial z} & \frac{\partial \alpha}{\partial y} \\
            \frac{\partial \alpha}{\partial z} & 0 & - \frac{\partial \alpha}{\partial x} \\
            -\frac{\partial \alpha}{\partial y} & \frac{\partial \alpha}{\partial x} & 0 
        \end{pmatrix}.
    \end{equation} 
    Equation~\eqref{eq:forceFree} has the same structure as Eq.~\eqref{eq:FGansatz}, with $\mathcal{F} = 0$ and $\mathcal{G} =  \alpha \mathcal{R} + \eta \mathcal{A} + \eta\alpha^{2}\mathcal{E}.$
    \smallskip
    \item \emph{Resistivity stratified along the current.} For a spatially varying resistivity, $\bm{\nabla}\times(\eta\bm{j}) = -\eta\Delta\bm{B} + \bm{\nabla}\eta\times\bm{j}$. If the resistivity varies only along the current direction, $\bm{\nabla}\eta\parallel\bm{j}$, the additional contribution
vanishes, and the situation reduces to case~(2).
\end{enumerate}

Two important cases fall outside this class because they introduce a preferred direction that is not constructed from $\bm{v}$ and $\bm{B}$. (i) For a resistivity varying transverse to the current, the term $\bm{\nabla}\eta\times\bm{j}$ is generally transverse to both fields. (ii) The Hall term is analogous, with $\bm{\nabla}n$ introducing an additional direction and yielding $\bm{\nabla}\times\bm{N} = -\left(\bm{\nabla B}\right)\bm{j}/ne$ at a magnetic null. This agrees with Sect.~\ref{sect:Hall}. In Hall MHD, it is the electron stagnation point, rather than the CM stagnation point, that co-locates with the magnetic null.

\section{Discussion and conclusions}
\label{sect:disc}

We performed the first comprehensive analysis of the physical conditions that need to be met so that the magnetic null point and the stagnation point co-locate. Our motivation for this study was based on the question whether the astropause
is represented by the magnetopause or by the hydropause, or if they are identical at least within a small domain around the apex.
If the nulls of the magnetic and flow fields were offset, the magnetic separatrix and the velocity separatrix might not even coincide locally, 
such that several boundaries between the interstellar stream and the stellar wind can exist.
Our analysis was performed in the frameworks of ideal HD and of ideal, resistive, non-ideal, and Hall MHD.

We wrote the ideal HD momentum equation, Eq. (\ref{eq:druck}), in a form that reflects 
the locally quasi-linear mapping character of the (albeit in reality) non-linear
differential operator. This type of tool of re-writing the force
operators of the fluid equation enabled us to argue that at an
extremum of the pressure, the velocity must vanish at this location if
the velocity field is three-dimensional there.

A generalisation for the case of MHD can be achieved when we consider an extremum of the total pressure that is co-located with a null point of one of the two fields. As we showed, if under these conditions the Jacobian of the respective other field is
non-singular at this location, then that field must vanish there as well, meaning that stagnation points must be magnetic null points, and vice versa. 
An extremum of the total pressure alone is insufficient to postulate the existence of any null point. At the location of such an extremum, we calculated the velocity field from a given magnetic field and vice versa, provided that the corresponding Jacobian is prescribed and non-singular, but neither 
of the fields needs to
vanish there. Conversely, one of the vector fields may possess a null at a location where neither the other field nor the pressure gradient vanishes. In that case, the pressure gradient is balanced by a single remaining force, the inertial force at a magnetic null and the magnetic tension at a stagnation point, since the other term vanishes together with the field.
However, when the second momentum equation is invoked, this freedom is removed. For an ideal Ohm's law, the stationary Faraday's law yields
$\bm{\nabla}\times(\bm{v}\times\bm{B})=\bm{0}$, and the two nulls then coincide, provided the corresponding Jacobian is non-singular. More generally, they coincide if the curl of the non-ideal term vanishes at the null.

When we compare the results from our exact analysis with those from MHD simulations, it is surprising that these mostly find an offset between the magnetic and the flow null
despite the fact that these simulations were performed with (in principle) an ideal Ohm's law \citep[e.g.,][]{2015ApJ...805..173R, 2020A&A...633L..12I, 2024ApJ...961..244F}.
The reasons for the discrepancies between the numerical and exact analytical results should be investigated in more detail. However, this is difficult because not all information on the input for the simulations can be transparently communicated. If the simulations use ideal theory, as is claimed by the authors, then the results should also clearly reflect the ideal stationary behaviour.

When we consider resistive (or non-ideal in general) MHD, co-location of the nulls can occur for specific resistivities and magnetic  field configurations. We showed that if dislocation occurs, it is of the order of $L/R_{m}$ and is typically small for the physical conditions in the heliosphere.

In the context of Hall MHD, the Hall term leads to a break in the symmetry because electrons and ions are treated separately. If co-location takes place, it is typically for the stagnation point of the electrons with the magnetic null point, while the stagnation point of the ions can be offset. While we did not calculate the offset distances directly, we estimated the influence of the Hall effect on them by
comparing the ion inertial scale with the typical length scale of the change in the magnetic field for different environments, such as the heliopause and the Earth magnetopause. We found that for the heliopause, the effect of the Hall term is very small and concluded that the smaller the Hall effect, the smaller the offset. The offset found in the simulations by \citet{2015ApJ...800L..28O} seems to be considerably larger than what we would expect from the Hall term alone.

When we release the condition of 3D null points, meaning that the rank of the Jacobian matrix of the magnetic field is smaller than 3 at a 3D velocity null, then the co-existence of a magnetic null
is not necessarily fulfilled. Instead, this configuration resembles that of a magnetic flipping or quasi-separatrix layer \citep{1992JGR....97.1521P, 1995JGR...10023443P, 2025PhPl...32i3904R} rather than to a proper magnetic separatrix. The problem is that such configurations do not enable us to distinguish between 
\lq outer\rq~and \lq inner\rq~magnetic field, that is, they do not allow for a topological separation.
The analogy is that the magnetic field has a 3D null, and there exists only a HD quasi-separatrix 
layer that does not enable us to distinguish between \lq outer\rq~and \lq inner\rq~medium.
   
The MHD equations, investigated from a general perspective in the vicinity of null points of vector fields, show that the occurrence of a co-location of null points of different vector fields (here, the magnetic and the velocity field) is natural when only one of the momentum equations is of ideal character. When the two momentum equations include non-ideal terms, the null points can be displaced. An exception of this is the case where the non-ideal terms present a linear behaviour. Then, the co-location can be realised. 
In general, we might expect that the use of classical multi-fluid terms, such as inertial terms and the Hall term, prevents the representation as given in Eq.~\eqref{eq:FGansatz}, since these terms introduce a preferred direction that is not constructed from $\bm{v}$ and $\bm{B}$ alone.
In particular, the Hall term can result in co-location only for the stagnation point of the electrons (rather than the CM) and the magnetic null point, as shown in Appendix~\ref{sect:app}.

For a generalised Ohm's law, it is sufficient for the co-location of the null points that at the magnetic null point, the curl of the non-ideal term has to vanish locally, whereas 
in the quasi-ideal dynamics, the global curl of the non-idealness vanishes automatically, as it is represented by a gradient (e.g. electron pressure in Hall MHD). An implication is that if all the external forces can be completely written as gradients, the co-location of the null points of the flow and magnetic field are 
guaranteed throughout the domain wherever the corresponding Jacobian is non-singular. To achieve co-location when the external forces are anisotropic, meaning that they additionally have a component resulting from the curl of a vector potential, the curl of the external forces has at least to vanish at the null.

Even if the null points of flow and magnetic field are co-located, one open issue is whether the 
tangential magnetopause (magnetic fan) is identical to the tangential astropause 
(hydrodynamic fan). This requires further investigation. When the two fans are identical locally, 
the second issue is whether the two separatrices are also identical globally, that is, whether magnetopause 
and astropause are identical. Furthermore, the question arises whether only field-aligned flows 
in the vicinity of the astropause guarantee the existence of an at least local identity of 
the two separatrices, that is, whether the two separatrix surfaces are tangential.

While it is always robust and desirable that the numerical and theoretical results can be 
compared with observations, the context we considered does not allow for such an analysis in a 
fully general manner. Future investigations, such as performing a linear instability 
analysis, might therefore help us to identify why configurations with a non-co-location are observed in 
simulations and observations, for example in quasi-2D planetary magnetopause reconnection processes
\citep[e.g.][]{2007PhPl...14j2114C, 2008JGRA..113.6210P, 2016Sci...352.2939B}.

\begin{acknowledgements}
This research made use of the NASA Astrophysics Data System (ADS).
D.H.N., K.S.S., and M.K. acknowledge financial support from the Czech Science Foundation 
(GA \v{C}R, grant number 25-17532S). The Astronomical Institute of the Czech Academy of Sciences 
is supported by the project RVO:67985815. This project is co-funded by the European Union (Project 
101183150 - OCEANS). 
\end{acknowledgements}

\bibliography{nullpoint}
\bibliographystyle{aa}

\onecolumn

\begin{appendix}
\nolinenumbers

\section{Co-location and dislocation in the framework of stationary flows in Hall MHD}
\label{sect:app}

We derive conditions under which magnetic null and stagnation point in stationary Hall MHD are co-located. We start from $\bm{E} = - \bm{\nabla} \phi_{e}$, consider the definition equation for the current $\bm{v}_e = \bm{v} - \frac{\bm{j}}{n e}$, and take into account the relation for the total electric field seen by the electrons and ions
\begin{equation}
\bm{E} +  \bm{v}_e\times \bm{B}  = - \frac{1}{n e}\bm{\nabla} p_{e} \qquad \Leftrightarrow  \qquad  \bm{E} + \bm{v}\times \bm{B}  =  \frac{1}{n e} \left( \bm{j}\times \bm{B} - \bm{\nabla} p_{e} \right)\,.  
\label{app:tot_E}
\end{equation}
Considering that $p_e$ is a function of $n$ or vice versa and taking the curl of the right-hand side equation of Eq.~\eqref{app:tot_E} leads to
\begin{equation}
\bm{\nabla}\times\left( \bm{v}
\times \bm{B} \right)= -\frac{1}{n^{2}e}\left[ 
\bm{\nabla} n \times (\bm{j} \times \bm{B}) \right] + \frac{1}{ne} \left[ (\bm{B} \cdot \bm{\nabla}) \bm{j} + (\bm{\nabla} \cdot \bm{B}) \bm{j} - (\bm{j} \cdot \bm{\nabla}) \bm{B} - (\bm{\nabla} \cdot \bm{j}) \bm{B} \right] 
\label{app:curl}
\end{equation}
Equation~\eqref{app:curl} can be reformulated in the following way,
\begin{equation}
(\bm{\nabla} \bm{v})\,\bm{B} - (\bm{\nabla} \bm{B})\,\bm{v} - (\bm{\nabla} \cdot \bm{v})\,\bm{B} = -\frac{1}{n^{2}e}\left[ 
(\bm{B} \cdot \bm{\nabla} n)\,\bm{j} - (\bm{j} \cdot \bm{\nabla} n)\,\bm{B}) \right] + \frac{1}{ne} \left[ (\bm{\nabla}\bm{j})\,\bm{B}  - (\bm{\nabla}\,\bm{B}) \bm{j} \right]\,. 
\label{app:reform}
\end{equation}
To investigate the conditions for possible co-location, we discuss the several cases separately.

\subsection{Stagnation point}

If we concentrate on the stagnation point and consider  
$\bm{v} = \bm{0}$, rank$(\bm{\nabla} \bm{v}) = 3$, and assume that at the stagnation point the density $\rho \propto n$ can be considered as constant, which means that
$\bm{\nabla} n = \bm{0} \Longrightarrow \bm{\nabla} \cdot \bm{v} = 0$,
the form of the induction equation, Eq.~\eqref{app:reform}, is given by 
\begin{equation}
(\bm{\nabla} \bm{v})\,\bm{B} =  \frac{1}{ne} \left[ (\bm{\nabla}\bm{j})\,\bm{B}  - (\bm{\nabla}\,\bm{B}) \bm{j} \right] 
\qquad \Leftrightarrow \qquad 
\left[ (\bm{\nabla} \bm{v}) -  \frac{1}{ne}  (\bm{\nabla}\bm{j}) \right]\,\bm{B} = - \frac{1}{ne} (\bm{\nabla}\,\bm{B}) \bm{j} 
\qquad \Leftrightarrow \qquad 
{\cal M} \,\bm{B} = - \frac{1}{ne} (\bm{\nabla}\,\bm{B}) \bm{j}\,.
\label{app:coloc_stag}
\end{equation}
For co-location to take place, Eq.~\eqref{app:coloc_stag} must be transformed into a homogeneous equation for $\bm{B}$. This can be achieved in several ways. 
In case the matrix ${\cal M}$ retains a rank of three, $\bm{B} = \bm{0}$ occurs if 
(i) $\bm{j} = \bm{0}$, meaning that there is a potential magnetic null \citep{1996PhPl....3..759P}, 
(ii)
$\bm{B}$ is a force-free field ($\bm{j}= \alpha\bm{B}$, where $\alpha$ is the force-free parameter) and rank$(\bm{B}) = 3$,
or 
(iii) rank$(\bm{\nabla}\,\bm{B}) = 2$ and $\bm{j}$ accordingly but $\bm{j}\neq \bm{0}$, implying that $(\bm{\nabla}\,\bm{B}) \bm{j} = \bm{0}$.

\subsection{Magnetic null point}

If we concentrate on the magnetic null point and consider 
$\bm{B} = \bm{0}$ and rank$(\bm{\nabla} \bm{B}) = 3$, the form of the induction equation, Eq.~\eqref{app:reform}, is given by
\begin{equation}
(\bm{\nabla} \bm{B})\,\bm{v} = \frac{1}{ne}(\bm{\nabla}\,\bm{B}) \bm{j} 
\qquad \Leftrightarrow \qquad  
(\bm{\nabla} \bm{B}) \left( \bm{v} - \frac{1}{ne} \bm{j} \right) = \bm{0} 
\qquad \Leftrightarrow \qquad 
(\bm{\nabla} \bm{B})\,\bm{v}_{e} = \bm{0} \,. 
\label{app_null}
\end{equation}
From the right-hand side equation of Eq.~\eqref{app_null} follows immediately that the electron velocity must be $\bm{v}_{e} = \bm{0}$. Since $\bm{v} \simeq \bm{v}_{i}$, the stagnation point of the ions should be in the vicinity of the one of the electrons.

To quantify the significance of the Hall term, we inspect two examples, the Earth magnetopause and the heliopause. For both, we estimate typical values for the ion inertial scale, defined by 
$c/\omega_{pi}$ where $c$ is the speed of light and $\omega_{pi}$ is the plasma frequency of the ions given by $\omega_{pi} = \sqrt{ne^{2}/(m_{i} \varepsilon_{0})}$ with the electric constant $\varepsilon_{0}$, and compare it with macroscopic length scales on which the magnetic field changes significantly.

Using a typical value of $n \simeq 2\cdot 10^{7}$m$^{-3}$ for the density of the Earth magnetopause of \citep[e.g.,][]{1994JGR....99..121P} results in a scale length of the ion inertia of $\sim 50$\,km. This length is about $10\%$ of the value of the thickness $L \simeq 500-700$\,km of the Earth magnetopause around the magnetic equator \citep{1982JGR....87.2108B, 2017Sci...356..960R}. This means that the Hall term cannot be neglected, and the stagnation point of the ions can be expected to have a non-negligible offset from the stagnation point of the electrons and its co-located magnetic null point.

The ratio of the length scales changes significantly when we turn to the conditions at the heliopause. 
The macroscopic length scale, on which the magnetic field varies across the heliopause, is on the order of $\sim 5\cdot 10^{6}$\,km \citep{2013Sci...341..147B}.
For typical values of the particle density of $\sim 2\times 10^{3}$\,m in the outer heliosphere \citep{2019NatAs...3.1024G} the scale length of the ion inertia is 
$\sim 5\cdot 10^{3}$\,km and hence comprises only $0.1\%$ of the macroscopic length scale. One may expect that the smaller the Hall effect the smaller the offset.    

\subsection{Quasi-static approach}

In the quasi-static approach, the term $\bm{j} \times \bm{B}$ can be written as $\bm{\nabla} p$, such that the curl of the right-hand side of Eq.~\eqref{app:tot_E} is zero, 
if the pressure $p-p_{e}$ depends on $n$
explicitly. A more general case would be that there exists a generalized pressure $\tilde{p}$ such that 
$ne \bm{\nabla} \tilde{p} + \bm{\nabla} p_{e} = \bm{j} \times \bm{B}$. This is an analogous case to quasi-ideal MHD and implies co-location.

\end{appendix}

\end{document}